\documentclass[conference,10pt]{IEEEtran}
\IEEEoverridecommandlockouts
\usepackage{cite}
\usepackage{amsmath,amssymb,amsfonts}
\usepackage{algorithmic}
\usepackage{graphicx}
\usepackage{textcomp}
\usepackage{xcolor}
\def\BibTeX{{\rm B\kern-.05em{\sc i\kern-.025em b}\kern-.08em
    T\kern-.1667em\lower.7ex\hbox{E}\kern-.125emX}}
\usepackage{cuted}
\usepackage{amsthm}

\newtheorem{lemma}{Lemma}
\theoremstyle{remark}

\newtheorem{definition}{Definition}
\usepackage{mathtools}
\usepackage{titlesec}
\titlespacing{\subsection}{0pt}{0.5\baselineskip}{0.1\baselineskip}
\begin{document}

\title{Modeling UAV-aided Roadside Cell-Free Networks with Matérn Hard-Core Point Processes
}

\author{
    \IEEEauthorblockN{Chenrui Qiu\IEEEauthorrefmark{1}, Yongxu Zhu\IEEEauthorrefmark{2}, Bo Tan\IEEEauthorrefmark{3}, George K. Karagiannidis\IEEEauthorrefmark{4}, Tasos Dagiuklas\IEEEauthorrefmark{1}}
    \IEEEauthorblockA{\IEEEauthorrefmark{1} School of Computer Science and Digital Technologies, London South Bank University, London, UK}
    \IEEEauthorblockA{\IEEEauthorrefmark{2} School of Information Science and Engineering, Southeast University, Nanjing, China}
    \IEEEauthorblockA{\IEEEauthorrefmark{3} Department of Electronic and Electrical Engineering, University College London, London, UK}
    \IEEEauthorblockA{\IEEEauthorrefmark{4} Department of Electrical and Computer Engineering, Aristotle University of Thessaloniki, Thessaloniki, Greece}
    \IEEEauthorblockA{\{qiuc3, tdagiuklas\}@lsbu.ac.uk, yongxu.zhu@seu.edu.cn, tan.bo@ucl.ac.uk, geokarag@auth.gr}
}

\maketitle

\begin{abstract}
This paper investigates a uncrewed aerial vehicles (UAV)-assisted cell-free architecture for vehicular networks in road-constrained environments. Roads are modeled using a Poisson Line Process (PLP), with multi-layer roadside access points (APs) deployed via 1-D Poisson Point Process (PPP). Each user forms a localized cell-free cluster by associating with the nearest AP in each layer along its corresponding road. This forms a road-constrained cell-free architecture. To enhance coverage, UAV act as an aerial tier, extending access from 1-D road-constrained layouts (embedded in 2-D) to 3-D. We employ a Matérn Hard-Core (MHC) point process to model the spatial distribution of UAV base stations, ensuring a minimum safety distance between them.  
In order to enable tractable analysis of the aggregate signal from multiple APs, a distance-based power control scheme is introduced.  
Leveraging tools from stochastic geometry, we have studied the coverage probability. Furthermore, we analyze the impact of key system parameters on coverage performance, providing useful insights into the deployment and optimization of UAV-assisted cell-free vehicular networks.

\end{abstract}

\begin{IEEEkeywords}
Uncrewed aerial vehicles, Cell-free, PLP, Matérn Hard-Core process.
\end{IEEEkeywords}

\section{Introduction}


The integration of uncrewed aerial vehicles (UAV) into cell-free massive MIMO architectures has received considerable attention due to its potential to enhance coverage and flexibility in next-generation wireless networks. A number of studies have examined the theoretical foundations of such integration, often employing stochastic geometry to analyze large-scale UAV distributions and their effects on system performance. For instance, \cite{RN172} proposes a stochastic geometry-based framework for evaluating the SINR and outage performance in aerial-user-enabled cell-free networks, while \cite{RN169} investigates cache-assisted UAV networks for emergency scenarios using Poisson Point Process (PPP) modeling. Additionally, \cite{RN181} considers underlay spectrum access with UAV-assisted cell-free networks, highlighting interference mitigation mechanisms. At the system level, \cite{RN183} provides one of the earliest analyses of UAV-assisted cell-free networks with a focus on channel hardening and pilot contamination, and \cite{RN184} explores the asymptotic behavior and optimal deployment strategies of UAV-based cell-free architectures. Survey-oriented works such as \cite{RN177} and \cite{RN175} summarize recent advancements in UAV swarming, mobility, and cell-free system applications, especially for IIoT and smart infrastructure.

Parallel to theoretical modeling, several studies have focused on optimization and deployment strategies for enhancing cell-free UAV performance. 
\cite{RN182} proposes joint user association and power allocation for fairness optimization in UAV-assisted cell-free networks, while \cite{RN179} formulates a weighted sum power minimization problem under UAV energy constraints. 
Beamforming and placement optimization are studied in \cite{RN178}, where an intelligent reflecting surface (IRS) is mounted on UAVs to improve spatial multiplexing. 
In a broader network context, \cite{RN180} examines UAV-assisted satellite and terrestrial integrated cell-free systems with energy efficiency objectives. Mobility and association dynamics are also investigated in \cite{RN174}, which proposes UAV-centric dynamic access point assignment for aerial–to-ground cell-free networks. 
Furthermore, \cite{RN165} explores NOMA-based UAV-assisted systems that could complement cell-free architectures.

Despite the progress in system-level optimization, most existing studies rely on channel or power-based association while overlooking traffic heterogeneity and queueing-induced latency, leading to excessive delay and inefficient resource utilization in dense networks \cite{ji2025delay}. Moreover, UAV-assisted cell-free solutions often adopt idealized assumptions, such as perfect hardware and high-capacity fronthaul, whereas practical imperfections can substantially degrade UAV communication performance \cite{11204827}, calling for more realistic and deployable designs.

Thus, this paper considers a hybrid architecture wherein line-distributed ground APs form a stable cell-free system, and UAV act as supplementary aerial base stations.
However, most existing studies assume uniform or unstructured AP distributions. In contrast, vehicular networks feature road-constrained AP layouts, which are rarely modelled. This motivates the need for a new framework that incorporates line-deployed cell-free APs.

Roads are modeled via a Poisson Line Process (PLP), with APs deployed along each road using 1-D PPPs and UAV forming a 3-D aerial tier. This results in a hierarchical 1-D to 3-D access structure suited to vehicular topologies.

In this work our main contributions are: (1) We develop an integrated spatial model combining PLP-based roadside APs and 3-D Matérn Hard-Core (MHC)-based UAV as base stations in case of collision.
(2) A road-constrained cell-free architecture is proposed, where each user forms a localized cooperation cluster by associating with the nearest AP in each layer along its corresponding road.
(3) We derive expressions for coverage using distance-based power control and Gamma approximations.
(4) Extensive simulations validate the analysis and identify optimal key parameters for UAV deployment in 
roadside cell-free network.

\section{System Model}
We consider a downlink cell-free vehicular network comprising ground APs and UAV that collaboratively serve vehicular users, as illustrated in Fig.~\ref{model}.
The road system is modeled by a motion-invariant PLP $\Phi_l$ in $\mathbb{R}^2$ with line density $\mu_l$, inducing a poisson point process in the representation space $\mathcal{C}$ with density $\lambda_l = \mu_l/\pi$. Each line in $\Phi_l$ corresponds to a straight road.
Along each road, APs are deployed following a homogeneous 1-D PPP with $K$ layers, each with density $\lambda_{\rm a}^{(k)}$. APs are single-antenna and connect with the same edge cloud instance in each road.
UAV act as aerial Base Stations (BSs) and are distributed in 3-D space according to a MHC process $\Phi_{\rm U}$ with density $\lambda_{\rm U}$ and safety distance $d$. All UAV hover at altitude $H_{\rm U}$ and use single-antenna transmission.
Vehicular users are independently distributed along roads via a 1-D PPP and equipped with single receive antennas. All APs and UAV transmit at powers $P_{\rm a}$ and $P_{\rm U}$, respectively. A typical user (UE) is located at the origin $\mathbf{O} = (0,0)$.
\begin{figure}[h]
\centering
\includegraphics[width=6.5cm, height=5cm]{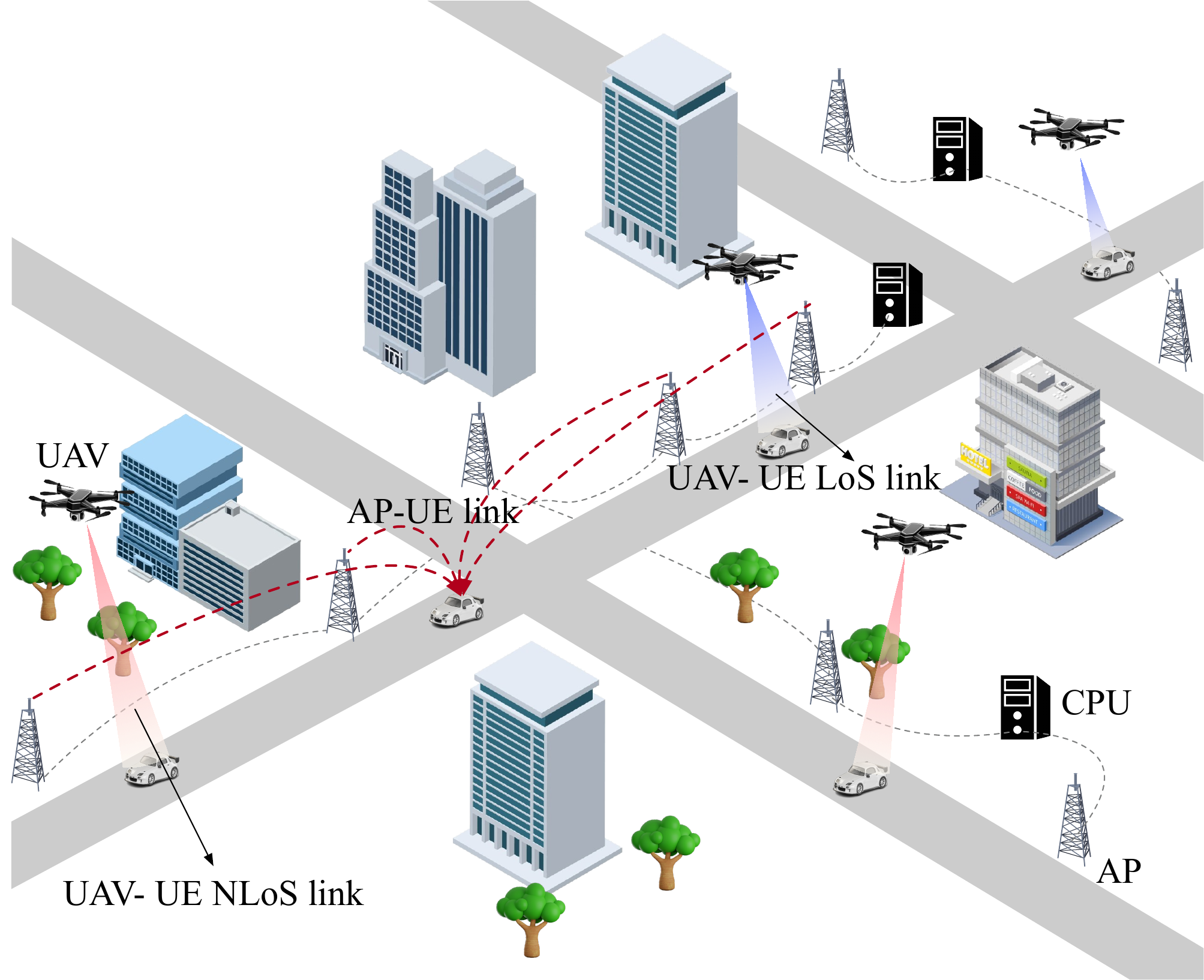} 
\caption{Architecture of vehicular Cell-free Network integrated with UAV base stations.}
\label{model}
\end{figure}

To ensure minimum spacing and mitigate UAV clustering, UAV are modeled via a Type-II MHC point process $\Phi_{\rm U}$ as \cite{article009090}, obtained by thinning a parent PPP $\Phi_p$ with intensity $\lambda_p$. Each point in $\Phi_p$ is assigned a uniform random mark in $[0,1]$ and retained only if it has the lowest mark within a disk of radius $d$. The resulting intensity is
$
\lambda_{\rm U} = \frac{1 - \exp(-\lambda_p \pi d^2)}{\pi d^2}.
$

Due to the spatial dependence in MHC, we employ the second-order product density $\zeta^{(2)}(v)$ to characterize the probability of finding two points at distance $v$:
\begin{equation}
\setlength{\abovedisplayskip}{4pt}
\setlength{\belowdisplayskip}{4pt}
\zeta^{(2)}(v) =
\begin{cases}
0, & v < d \\  
\frac{
2V(v) \left[1 - e^{-\lambda_p \pi d^2} \right] - 2\pi d^2 \left[1 - e^{-\lambda_p V(v)} \right]
}{
\pi d^2 V(v) \left[ V(v) - \pi d^2 \right]
}, & d \leq v \leq 2d  \\
\lambda_u^2, & v > 2d 
\end{cases}
\end{equation}
where $V(v)$ is the union area of two disks separated by $v$:
\begin{align}
V(v) = 2\pi d^2 - 2d^2 \cos^{-1}\left( \tfrac{v}{2d} \right) + v \sqrt{d^2 - \tfrac{v^2}{4}}.
\end{align}

\subsection{Channel Model}
UAV-UE links may be either line-of-sight (LoS) or non-line-of-sight (NLoS). Given UAV altitude $H_{\rm U}$ and horizontal distance $z$, the LoS probability is modeled as \cite{6863654}
\begin{equation}
p_{\mathrm{L}}(z)
= \frac{1}{1 + a \exp\left(-b\left(\tfrac{180}{\pi}\tan^{-1}\left(\tfrac{H_{\rm U}}{z}\right) - a\right)\right)},
\end{equation}
where $a$ and $b$ are environment-dependent parameters. The NLoS probability is $p_{\mathrm{NL}}(z) = 1 - p_{\mathrm{L}}(z)$.

We consider a directional beamforming strategy where antenna main lobes are aligned along the road direction. The ground AP antenna pattern is sectorized, with main and side lobe gains $G_{\rm a}$ and $g_{\rm a}$, respectively. UAV are similarly equipped with directional antennas, characterized by $G_{\rm U}$ and $g_{\rm U}$. The serving antenna always steers its main lobe toward the intended user.

We assume that UAV-UE links experience Nakagami-$m$ fading with parameters $m_{\rm L}$ and $m_{\rm NL}$ for LoS and NLoS, respectively. For AP-UE links, we adopt Nakagami-$m$ fading with parameters $m_{\rm 21}$ and $m_{\rm 22}$, corresponding to typical and non-typical lines. The respective channel gains are denoted by $h_{\rm L}$, $h_{\rm NL}$, $h_{\rm a_0}$, and $h_{\rm a_1}$, with all $m$ values assumed to be positive integers.

Large-scale fading due to path loss is considered, with exponent $\alpha_{\rm a} \geq 2$ for AP-UE links and $\alpha_v$ ($v \in \{\mathrm{L,NL}\}$) for UAV-UE links under LoS and NLoS conditions.
\subsection{Association Policy}
We assume that the typical user associates with the AP or UAV providing the strongest biased received power. To enable load balancing, selection bias factors $B_{\rm U}$ and $B_{\rm a}$ are applied for \textit{tier}-1 (UAV) and \textit{tier}-2 (APs), respectively. The serving node $r^*$ is determined as
\begin{align}
r^* = \arg \max \left\{ B_{\rm U} P_{\rm U} G_{\rm U} r_1^{-\alpha_v},\; B_{\rm a} P_{\rm a} G_{\rm a} r_2^{-\alpha_{\rm a}} \right\}, 
\end{align}
where $r_1$ denotes the 3-D distance between the typical UE and its nearest UAV, and  $r_2 = \arg \max_{r \in \Xi_{L_{0}}} B_{\rm a} P_{\rm a} G_{\rm a} r^{-\alpha_{\rm a}} $, and $v \in \{\mathrm{L, NL}\}$ denotes the LoS or NLoS condition.

Since UAV are modeled as a homogeneous PPP in $\mathbb{R}^2$, the horizontal distance $z$ between the user and its nearest UAV follows a Rayleigh distribution with the PDF
$f_Z(z) = 2\pi \lambda_{\rm U} z \exp(-\pi \lambda_{\rm U} z^2).$
The probability that the typical user associates with a UAV via a LoS link is given by
\begin{align}
\mathcal{A}_{\mathrm{L}} = \mathbb{E}_Z \left[ p_{\mathrm{L}}(z) \right] 
= \int_0^\infty p_{\mathrm{L}}(z) \cdot f_Z(z)\, dz,
\end{align}
and the NLoS association probability is $\mathcal{A}_{\mathrm{NL}} = 1 - \mathcal{A}_{\mathrm{L}}$.

\subsection{SINR and Coverage Probability}
The received signal power at the typical receiver $\textbf{O}$ is decided by the transmit node. When the receiver is associated with tier-1 UAV node, the received power can be expressed by 
\begin{equation}
P_r = P_{\rm U} G_{\rm U} h_{v} r^{-\alpha_v}, \quad v \in \{\mathrm{L}, \mathrm{NL}\},
\end{equation}
where $P_{\rm U}$ is the UAV transmit power, $h_{v}$ is the Nakagami-$m$ fading gain, and $\alpha_v$ represents the path loss exponent under LoS or NLoS conditions.

If the typical receiver is served by the closest AP of each layer on the road, the received power can be defined as
\begin{equation}
\setlength{\abovedisplayskip}{4pt}
\setlength{\belowdisplayskip}{4pt}
P_r = \sum_{k} P_{\rm a} G_{\rm a} h_{{\rm a}_0}^{(k)} r_k^{-\alpha_{\rm a}},
\end{equation}
where $r_k$ and $h_{{\rm a}_0}^{(k)}$ denote the distance and fading gain associated with the $k$-th layer closest AP.

Based on the above assumptions, the SINR received at the typical receiver can be expressed as
$\gamma = \frac{P_r} {\sigma^2 + \mathcal{I}_1+ \mathcal{I}_{2}},$
where $ \mathcal{I}_1$ is the interference from the \textit{tier}-1 nodes,   $ \mathcal{I}_{2}$ is interference from the \textit{tier}-2 nodes.
To evaluate the probability that the typical receiver achieves a rate above a target threshold $\gamma_0$, we define the coverage probability as
$\mathbb {P}_{\rm C}(\gamma\geq\gamma_0)$ \cite{11195162}.


\section{Evaluation of Performance Metrics}
This section analyzes key metrics for the proposed UAV-assisted vehicular network. We first derive the association probabilities under biased received power association. Then, we characterize the serving distance distributions, which form the basis for SINR and coverage analysis.

\subsection{Association Probability}
\begin{definition}
The 3-D distance $r_1$ from the typical user to its serving UAV has the CDF and PDF as
\begin{align}
F_{R_1}(r_1) &= 1 - \exp\left( - \lambda_{\rm U} \pi (r_1^2 - {H_{\rm U}}^2) \right), r_1 \geq {H_{\rm U}}, \\
f_{R_1}(r_1) &= 2 \pi \lambda_{\rm U} r_1 \exp\left( - \lambda_{\rm U} \pi (r_1^2 - {H_{\rm U}}^2) \right), r_1 \geq {H_{\rm U}}. \label{eq:pdf_r1}
\end{align}
\end{definition}

\begin{definition}
The 1-D distance $r_2$ from the user to its nearest AP along the road follows
\begin{align}
F_{R_2}(r_2) &= 1 - \exp\left( -2 \lambda_{\rm a} r_2 \right),  r_2 \geq 0, \\
f_{R_2}(r_2) &= 2 \lambda_{\rm a} \exp\left( -2 \lambda_{\rm a} r_2 \right),  r_2 \geq 0. 
\end{align}
\end{definition}

\begin{lemma}
We denote that typical receiver connects to UAV (\textit{tier}-1) as event \(\mathcal{E}_1\), to APs (\textit{tier}-2) as event \(\mathcal{E}_2\). The probability of the event \(\mathcal{E}_1\) and \(\mathcal{E}_2\) defined as
\begin{equation}
    \begin{aligned}
        \begin{cases} 
        \mathbb{P}(\mathcal{E}_1) = \mathcal{A}_{\mathrm{L}}\mathcal{A}_{\mathrm{L}}^1 + \mathcal{A}_{\mathrm{NL}}\mathcal{A}_{\mathrm{NL}}^1,\\
        \mathbb{P}(\mathcal{E}_2) = 1-\mathbb{P}(\mathcal{E}_1).
        \end{cases}
    \end{aligned} \label{pe1}
\end{equation}
where $\mathcal{A}_{v}^1, v\in \{\rm {L,NL}\}$ means UE biased received power at LoS or NLoS UAV-UE link is larger than from APs, could be expressed as  
\begin{align}\label{user_association}
	\mathcal{A}_{v}^1 &=  \mathbb{P} \left ({P_{\rm U}B_{\rm U}G_{\rm U}R_{1}^{-\alpha_v } > P_{\rm a}B_{\rm a}G_{\rm a} R_{2}^{-\alpha_{\rm a} }}\right) \\&= 1 - \int_0^\infty  {{e^{ - {\lambda_{\rm U}}\pi \rho(r_2)}}}  \times 2{\lambda _a}{e^{ - 2{\lambda _a}{r_2}}}{\mkern 1mu} {\rm{d}}{r_2},\notag
\end{align}
where $\rho(r_2) = \max\left( 0, \xi_{21}^{ - \frac{2}{\alpha_v} } r_2^{ \frac{2 \alpha_{\rm a}}{ \alpha_v } } - H_{\rm U}^2 \right)$.
\end{lemma}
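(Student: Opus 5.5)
The plan is to condition on the LoS/NLoS state of the link to the nearest UAV and then compare biased received powers. First, by the law of total probability over $v\in\{\mathrm{L},\mathrm{NL}\}$, with $\mathcal{A}_{\mathrm{L}}$ and $\mathcal{A}_{\mathrm{NL}}$ as defined in the association policy, $\mathbb{P}(\mathcal{E}_1)=\sum_{v}\mathcal{A}_v\,\mathbb{P}(\mathcal{E}_1\mid v)$. Given $v$, the event $\mathcal{E}_1$ is exactly $\{P_{\rm U}B_{\rm U}G_{\rm U}R_1^{-\alpha_v}>P_{\rm a}B_{\rm a}G_{\rm a}R_2^{-\alpha_{\rm a}}\}$, which is $\mathcal{A}_v^1$. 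Here I will treat the link state as independent of $R_1$ and $R_2$, which is the approximation implicit in the product form of \eqref{pe1}. Since the two tiers exhaust the serving options, $\mathbb{P}(\mathcal{E}_2)=1-\mathbb{P}(\mathcal{E}_1)$.

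To evaluate $\mathcal{A}_v^1$, I will write $\xi_{21}=\frac{P_{\rm a}B_{\rm a}G_{\rm a}}{P_{\rm U}B_{\rm U}G_{\rm U}}$, which the statement leaves implicit. The inequality then becomes $R_1<\xi_{21}^{-1/\alpha_v}R_2^{\alpha_{\rm a}/\alpha_v}$. I will condition on $R_2=r_2$ and use the independence of the UAV process and the roadside AP processes. By the CDF of $R_1$ in the first Definition,
\begin{align*}
\mathbb{P}\!\left(R_1<\xi_{21}^{-1/\alpha_v}r_2^{\alpha_{\rm a}/\alpha_v}\right)=1-e^{-\lambda_{\rm U}\pi\left(\xi_{21}^{-2/\alpha_v}r_2^{2\alpha_{\rm a}/\alpha_v}-H_{\rm U}^2\right)}
\end{align*}
when the threshold exceeds $H_{\rm U}$, and $0$ otherwise. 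Using $\rho(r_2)$, with its $\max(0,\cdot)$ encoding the support $r_1\ge H_{\rm U}$, both cases are captured by $1-e^{-\lambda_{\rm U}\pi\rho(r_2)}$. Averaging over $r_2$ with the PDF $f_{R_2}$ from the second Definition, and using $\int_0^\infty f_{R_2}=1$, gives
\begin{align*}
\mathcal{A}_v^1=1-\int_0^\infty e^{-\lambda_{\rm U}\pi\rho(r_2)}\,2\lambda_{\rm a}e^{-2\lambda_{\rm a}r_2}\,\mathrm{d}r_2 .
\end{align*}

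The main obstacle is justifying the modeling shortcuts rather than the calculus. There are three of them.

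\emph{UAV distance law.} The Definition's Rayleigh-type law for $R_1$ treats the MHC process $\Phi_{\rm U}$ as a PPP of density $\lambda_{\rm U}$ for nearest-neighbour purposes. I will invoke this approximation exactly as the paper does.

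\emph{AP tier.} The tier-2 comparison uses a single nearest AP with density $\lambda_{\rm a}$, rather than the aggregate over $K$ layers. For this I will take $\lambda_{\rm a}$ to be the relevant layer density (or the superposed density $\sum_k\lambda_{\rm a}^{(k)}$), consistent with the definition of $r_2$ as the strongest AP on $L_0$.

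\emph{LoS state versus $R_1$.} Strictly, $p_{\mathrm{L}}$ depends on $z$, so $\mathcal{E}_1$ and the LoS state are correlated. I would state that the decoupled form $\mathcal{A}_v\mathcal{A}_v^1$ is the adopted approximation. The exact version would instead integrate $p_v(\sqrt{r_1^2-H_{\rm U}^2})$ against $f_{R_1}$ inside the comparison.
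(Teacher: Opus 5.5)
Your proposal is correct and matches the paper's proof. Like the paper, you rewrite the comparison as $R_1<\xi_{21}^{-1/\alpha_v}R_2^{\alpha_{\rm a}/\alpha_v}$, condition on $R_2$, and substitute $F_{R_1}$, with the $\max(0,\cdot)$ in $\rho(r_2)$ encoding the support $r_1\ge H_{\rm U}$; you then average against $f_{R_2}$ and split off $\int_0^\infty f_{R_2}=1$. You go slightly beyond the paper by obtaining the product form of $\mathbb{P}(\mathcal{E}_1)$ through total probability over the link state, and by explicitly flagging the independence and PPP-for-MHC approximations it relies on, which the paper simply asserts.
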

\begin{proof}
    See Appendix ~\ref{appendixB}. 
\end{proof}

\subsection{Serving Distance Distribution}
Due to distinct association scenarios under $\mathcal{E}_1$ and $\mathcal{E}_2$, the corresponding serving distance distributions are evaluated separately to precisely reflect their influence on coverage performance in heterogeneous networks.

\begin{lemma}
\label{lemma2}
Conditioned on the event $\mathcal{E}_1$ and $\mathcal{E}_2$, the PDF of the serving distance is given by
\begin{align}
	&{f_R}(r|{\mathcal{A}_{v}^1 }) = \frac{{2\pi {\lambda _u}r}}{{{\mathbb{P}}(\mathcal{A}_{v}^1 )}}{e^{ - \pi {\lambda _u}({r^2-{H_{\rm U}}^2}) - 2{\lambda _a}\xi _{21}^{1/{\alpha_{\rm a}}}{r^{{\alpha _v}/{\alpha_{\rm a}}}}}},\\
	&{f_R}(r|{\mathcal{A}_{v}^2}) = \frac{{2{\lambda _a}}}{{{\mathbb{P}}(\mathcal{A}_{v}^2)}}{e^{ - 2{\lambda _a}r - \pi {\lambda _u}\rho(r)}}. 
\end{align}
where $\rho (r) = \max (0,\xi _{21}^{\frac{{ - 2}}{{{\alpha _v}}}}{r^{\frac{{2{\alpha _{\rm{a}}}}}{{{\alpha _v}}}}} - {H_{\rm{U}}}^2)$, $\xi_{21} = \frac{P_{\rm a}B_{\rm a}G_{\rm a}}{P_{\rm U}B_{\rm U}G_{\rm U}}$, $v\in \{\rm {L,NL}\}$,
\end{lemma}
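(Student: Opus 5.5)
The plan is the standard conditioning argument for serving-distance laws in biased-association heterogeneous networks. The serving distance $R$ is the distance to the node that wins the biased-power comparison. So for each event, its density is the joint density of "winner at distance $r$ and the other tier loses", divided by the association probability of that event, which is given by the previous lemma. Throughout I fix the link state $v\in\{\mathrm{L},\mathrm{NL}\}$ and use the two distance laws from the Definitions: $R_1$ with PDF \eqref{eq:pdf_r1}, and $R_2$ exponential with parameter $2\lambda_{\rm a}$. As in the derivation of $\mathcal{A}_v^1$, I treat $R_1$ and $R_2$ as independent. This is legitimate because the UAV tier and the roadside AP tier are generated by independent processes.

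\textbf{UAV case.} Conditioned on $\mathcal{A}_v^1$, write the tail of the serving distance as
\[
\mathbb{P}(R>r\mid\mathcal{A}_v^1)=\frac{\mathbb{P}\bigl(R_1>r,\;P_{\rm U}B_{\rm U}G_{\rm U}R_1^{-\alpha_v}>P_{\rm a}B_{\rm a}G_{\rm a}R_2^{-\alpha_{\rm a}}\bigr)}{\mathbb{P}(\mathcal{A}_v^1)}.
\]
The association inequality is equivalent to $R_2>\xi_{21}^{1/\alpha_{\rm a}}R_1^{\alpha_v/\alpha_{\rm a}}$. Conditioning on $R_1=x$ and using the complementary CDF of $R_2$ gives the inner probability $\exp(-2\lambda_{\rm a}\xi_{21}^{1/\alpha_{\rm a}}x^{\alpha_v/\alpha_{\rm a}})$. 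Integrating against $f_{R_1}$ over $x>r$ and differentiating in $r$ then yields the first expression directly.

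\textbf{AP case.} Symmetrically, conditioned on $R_2=r$, the UAV loses exactly when $R_1>\xi_{21}^{-1/\alpha_v}r^{\alpha_{\rm a}/\alpha_v}$. This threshold may fall below the altitude $H_{\rm U}$, while the support of $R_1$ is $[H_{\rm U},\infty)$. The step needing care is therefore the truncation. We have $\mathbb{P}(R_1>t)=\exp(-\pi\lambda_{\rm U}(t^2-H_{\rm U}^2))$ for $t\ge H_{\rm U}$, and this probability equals $1$ for $t<H_{\rm U}$. Both cases are captured by writing $\exp(-\pi\lambda_{\rm U}\rho(r))$ with $\rho(r)=\max(0,\xi_{21}^{-2/\alpha_v}r^{2\alpha_{\rm a}/\alpha_v}-H_{\rm U}^2)$. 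Multiplying by $f_{R_2}(r)=2\lambda_{\rm a}e^{-2\lambda_{\rm a}r}$ and normalising by $\mathbb{P}(\mathcal{A}_v^2)=1-\mathcal{A}_v^1$ gives the second expression. Integrating each density over its support reproduces its normalising constant, which is consistent with the previous lemma.

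\textbf{Expected obstacle.} The main difficulty is modelling rather than calculus. The UAV tier is an MHC process, yet Definition 1 uses the PPP-type nearest-neighbour law with density $\lambda_{\rm U}$. I will adopt that approximation explicitly, as the preceding results do. A second point is notational: the factor $\lambda_u$ in the statement is $\lambda_{\rm U}$. Finally, the first formula is valid only for $r\ge H_{\rm U}$, since that is the support of $R_1$, and I will state this restriction.
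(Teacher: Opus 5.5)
Your proposal is correct and follows essentially the same route as the paper's appendix. Both write the conditional tail/CDF of the serving distance, apply total probability over the other tier's distance (using independence of the tiers), and differentiate, with the $\max(0,\cdot)$ in $\rho$ absorbing the altitude truncation. The only difference is the order of integration: you condition on $R_1$ directly in the UAV case, which makes the differentiation immediate, while the paper integrates over $R_2$ first. Your threshold $R_2>\xi_{21}^{1/\alpha_{\rm a}}R_1^{\alpha_v/\alpha_{\rm a}}$ is the correct one and matches the statement; the appendix writes this exponent as $1/\alpha_v$, which is a typo.
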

\begin{proof}
See Appendix~\ref{appendixC}.
\end{proof}

\subsection{Coverage Probability}
Power control is employed to mitigate path loss and enhance reliability in collaborative downlink transmission. For the $k$-th layer AP at distance $r_{2}$, the transmit power is defined as
\begin{align}
P_2 = C r_{2}^{\alpha_{\rm a}},
\end{align}
where $\alpha_{\rm a}$ is the path loss exponent and $C$ is a power scaling constant.

To ensure tractability, power control is applied only to desired signals from the nearest AP in each layer on the typical line, while interference remains unaffected. This selective strategy decouples the desired signal from the interference field, simplifying SINR analysis under PLP-induced spatial correlation. It also enables tractable characterization of coverage in multi-layer cooperative networks.

Thus, coverage probability is 
\begin{align} \label{pc2}
\mathbb {P}_{\rm C} &=\sum_{v \in\{\mathrm{L}, \mathrm{NL}\}}\mathcal{A}_{v}\mathcal{A}_{v}^1\mathbb{P}(\frac{P_{\rm U} G_{\rm U} h_{\rm U} r_1^{-\alpha_v}}{\sigma^2 + \mathcal{I}_1+\mathcal{I}_2}>\gamma_0) \\ &~~~~~~~~ \quad +\mathcal{A}_{v}\mathcal{A}_{v}^2 \mathbb{P}(\frac{\sum_{k=1}^{K}P_{\rm a} G_{\rm a} h_{\rm a_0}^{(k)} r_2^{-\alpha_{\rm a}}}{\sigma^2 + \mathcal{I}_1+\mathcal{I}_2}>\gamma_0), \nonumber
\end{align}
where the main interference is from non-typical UAV $\mathcal{I}_1$  and all APs $\mathcal{I}_2$ for the typical receiver associated with UAV.
The Laplace transform of interference from non-typical UAV $ \mathcal{I}_1$ is derived as \eqref{I1} [\cite{article009090}, Eq.(A.4)],
\begin{figure*}[!t]
\rule{\textwidth}{0.4pt}
\begin{small}
\begin{align} \label{I1}
\begin{array}{l}
{{\cal L}_{{\mathcal{I}_1}}}(s) = {{\rm{E}}_{{\Phi _u}}}\left[ {{\prod _{{x_i} \in {\Phi _u}\backslash {x_o}}}\left( {{p_{\rm{L}}}({r_{{x_i}}}){{\rm{E}}_{{h_{\rm{U}}}}}\left[ {{e^{ - s{P_{\rm{U}}}{G_{\rm{U}}}{h_{\rm{U}}}r_{{x_i}}^{ - {\alpha _{\rm{L}}}}}}} \right] + {p_{{\rm{NL}}}}({r_{{x_i}}}){{\rm{E}}_{{h_{\rm{U}}}}}\left[ {{e^{ - s{P_{\rm{U}}}{G_{\rm{U}}}{h_{\rm{U}}}r_{{x_i}}^{ - {\alpha _{{\rm{NL}}}}}}}} \right]} \right)} \right]\\
 = \exp \{  - {\lambda _{\rm{U}}}^{ - 1} \times \\
\{ \left[ {\int\limits_0^{2\pi } {\int\limits_{{v_{\min }^{(1)}}(\theta )}^{{v_{\max }^{(1)}}(\theta )} {1 - {{(1 + \frac{{s{P_{\rm{U}}}{G_{\rm{U}}}{l ^{ - {\alpha _{\rm{L}}}}}}}{{{m_{\rm{L}}}}})}^{ - {m_{\rm{L}}}}}v{p_{\rm{L}}}({\rm{y}}){\zeta ^{(2)}}(v){\rm{d}}v{\rm{d}}\vartheta  + \int\limits_0^{2\pi } {\int\limits_{{v_{\min }^{(2)}}(\theta )}^\infty  {1 - {{(1 + \frac{{s{P_{\rm{U}}}{G_{\rm{U}}}{l ^{ - {\alpha _{\rm{L}}}}}}}{{{m_{\rm{L}}}}})}^{ - {m_{\rm{L}}}}}v{p_{\rm{L}}}({\rm{y}})\lambda _U^2{\rm{d}}v{\rm{d}}\vartheta } } } } } \right]\\
 + \left[ {\int\limits_0^{2\pi } {\int\limits_{{v_{\min }^{(1)}}(\theta )}^{{v_{\max }^{(1)}}(\theta )} {1 - {{(1 + \frac{{s{P_{\rm{U}}}{G_{\rm{U}}}{l ^{ - {\alpha _{{\rm{NL}}}}}}}}{{{m_{{\rm{NL}}}}}})}^{ - {m_{\rm{L}}}}}v{p_{{\rm{NL}}}}({\rm{y}}){\zeta ^{(2)}}(v){\rm{d}}v{\rm{d}}\vartheta  + \int\limits_0^{2\pi } {\int\limits_{{v_{\min }^{(2)}}(\theta )}^\infty  {1 - {{(1 + \frac{{s{P_{\rm{U}}}{G_{\rm{U}}}{l ^{ - {\alpha _{{\rm{NL}}}}}}}}{{{m_{{\rm{NL}}}}}})}^{ - {m_{{\rm{NL}}}}}}v{p_{{\rm{NL}}}}({\rm{y}})\lambda _U^2{\rm{d}}v{\rm{d}}\vartheta } } } } } \right]\},
\end{array}
\end{align}
\end{small}
where $ {v_{\min }^{(1)}(\vartheta )} = \max[d,2x\cos\vartheta] $, ${v_{\max }^{(1)}(\vartheta )} =\max[2d,2x\cos\vartheta] $, ${v_{\min }^{(2)}(\vartheta )} =\max[2d,2x\cos\vartheta]  $ and ${v_{\max }^{(2)}(\vartheta )} = \infty$, $x=\sqrt{r^2-H_{\rm U}^2}$, $l=\sqrt{r^2+v^2-2v\cos \vartheta \sqrt{r^2-H_{\rm U}^2}}$. \\ 
\rule{\textwidth}{0.4pt}
\begin{small}
\begin{align} \label{I1_1}
\mathcal{L}_{\mathcal{I}_1}(s) &= \mathbb{E}_{\Phi_u} \left[ \prod_{x_i \in \Phi_u \backslash x_o}
\left(
p_{\mathrm{L}}(r_{x_i}) \, \mathbb{E}_{h_{\rm U}}\left[ e^{-s P_{\rm U} G_{\rm U} h_{\rm U} r_{x_i}^{-\alpha_{\rm L}}} \right] 
+ p_{\mathrm{NL}}(r_{x_i}) \, \mathbb{E}_{h_{\rm U}} \left[ e^{-s P_{\rm U} G_{\rm U} h_{\rm U} r_{x_i}^{-\alpha_{\rm NL}}} \right]
\right) \right]\\
& =\exp\Biggl(
-\,2\pi\,\lambda
\int_{r}^{\infty}
\Bigl[
1 
-\,\Bigl(
    p_{\mathrm{L}}(r)\,\bigl(1 + \tfrac{s\,P_{\rm U}\,G_{\rm U} r^{-\alpha_{\rm L}}}{m_1}\bigr)^{-m_1}
    \;+\;
    p_{\mathrm{NL}}(r)\,\bigl(1 + \tfrac{s\,P_{\rm U}\,G_{\rm U} r^{-\alpha_{\rm NL}}}{m_1}\bigr)^{-m_1}
  \Bigr)
\Bigr]\,r\,dr
\Biggr).\notag
\end{align}
\end{small}
\rule{\textwidth}{0.4pt}
\begin{align} \label{Pcov}
\mathbb {P}_{\rm C} &= \mathbb{P}(\text{SINR} > \gamma_0)\\
&=\mathcal{A}_{v} \biggl\{\sum_{v \in\{\mathrm{L},\,\mathrm{NL}\}}\mathcal{A}_{v}^1\int_h^\infty \sum_{k=0}^{m - 1} \frac{(-s^\square)^k}{k!}
\frac{\mathrm{d}^k}{\mathrm{d}{s^\square}^k} \left[
e^{-s' \sigma^2} \mathcal{L}_{\mathcal{I}}(s^\square)
\right] f_R(r_1) \, \mathrm{d}r_1  + \mathcal{A}_{v}^2 ~\sum \limits _{n = 1}^{{k_{\text {S}}}} {{(- 1)^{n + 1}}\left ({{\begin{array}{*{20}{c}} {{k_{\text {S}}}} \\ n \end{array}} }\right)} {e^{ - {s}{\sigma ^{2}}}} {\mathcal {L}_{\mathcal {I}}}\left ({{{s}} }\right) \biggr \}, \nonumber\\
&\text{where $s^\square = \frac{m \gamma_0}{P_{\rm U}G_{\rm U}} r_1^{\alpha_v}$, $s = \frac{n\eta \gamma_0}{CG_{\rm a}}$, $\eta = \frac{\bigl(k_S!\bigr)^{-1/k_S}}{\theta_S}$. } \nonumber
\end{align}
\rule{\textwidth}{0.4pt}
\end{figure*}
\(\mathcal{I}_{2}\) is defined as 
\begin{equation}
{{\cal I}_2} = \underbrace {\sum\limits_{k = 1}^K {\sum\limits_{{x_i} \in {\Xi _{{L_0}}}} {\frac{{{P_{\rm{a}}}{G_{\rm{a}}}{h_{{{\rm{a}}_{\rm{0}}}}}}}{{r_{{x_i}k}^{{\alpha _{\rm{a}}}}}}} } }_{{{\cal I}_{21}}} + \underbrace {\sum\limits_{k = 1}^K {\sum\limits_{{x_i} \in {\Xi _{{L_i}}} \setminus {\Xi _{{L_0}}}} {\frac{{{P_{\rm{a}}}{G_{\rm{a}}}{h_{{{\rm{a}}_{\rm{1}}}}}}}{{r_{{x_i}k}^{{\alpha _{\rm{a}}}}}}} } }_{{{\cal I}_{22}}}
\end{equation}
where, $\mathcal{I}_{21}$ is the interference typical line APs in all layers, $\mathcal{I}_{22}$ is the interference from non-typical line in all layers.
The Laplace transform of the total interference $\mathcal{I}_2$ simplifies to
\begin{small}
\begin{align}
{{\cal L}_{{\mathcal{I}_2}}}(s) &= \exp \left\{ {  - 2\pi  \cdot \pi {\lambda _l}  } \right.\\
&\quad \times \left. {\sum\limits_{k=1}^K {{\lambda _{{a_k}}}} {\int_0^\infty } \left( {1 - {{\left( {1 + \frac{{s{P_{\rm a} G_{\rm a}}}}{m_2}{r^{ - {\alpha_{\rm a}}}}} \right)}^{ - m_2}}} \right)r{\mkern 1mu} dr} \right\}. \nonumber
\end{align}
\end{small}

For the typical receiver associated with AP, the interference $\mathcal{I}_1$ from all UAV and $\mathcal{I}_2$ from all APs on non-typical lines.
The Laplace transform of \(\mathcal{I}_{1}\) can be expressed as ~\eqref{I1_1}. The Laplace transform of \(\mathcal{I}_{2}\) can be expressed as
\begin{align}
{{\cal L}_{{{\cal I}_2}}}(s) = \prod\limits_{k = 1}^K {\exp } \left\{ {\left. \begin{array}{l}
 - 2\pi {\lambda _l}\int_0^\infty  [ 1 - \exp ( - 2{\lambda _a} \times \\
\int_0^\infty  {1 - \frac{1}{{{{(1 + \theta (x,y))}^{{m_2}}}}}dx} )]dy
\end{array} \right\}} \right.,
\end{align}
where $\theta (x,y) = \frac{{s{P_{\rm{a}}}{G_{\rm{a}}}}}{{{m_2}{{({x^2} + {y^2})}^{{\alpha _{\rm{a}}}/2}}}}$.

Thus, the coverage probability \eqref{pc2} can be derived as $\mathbb {P}_{\rm C}$ in (\ref{Pcov}), where $s' = \frac{m \gamma_0}{P_{\rm U}} r_1^{\alpha_1}$.
\begin{proof}
See Appendix \ref{appendix D}. 
\end{proof}

\section{Numerical Results}
We consider UAV distributed as a homogeneous PPP with density $\lambda_{\rm U} = 10/\mathrm{km}^2$ at a fixed altitude of $100,\mathrm{m}$. Roads follow a PLP with line density $\mu_l = 10/\mathrm{km}$, and APs are deployed along each road as a 1-D PPP with density $\lambda_{\rm a} = 2/\mathrm{km}$ across $K = 3$ layers.
For simplicity, UAV and APs are assumed to have identical characteristics: transmit powers $P_{\rm U} = 30~\mathrm{dBm}$, $P_{\rm a} = 23~\mathrm{dBm}$; path loss exponents $\alpha_{\rm L} = 2$, $\alpha_{\rm NL} = 3$, and $\alpha_{\rm a} = 3$; LoS model parameters $a = 12.08$, $b = 0.11$; mainlobe antenna gains $G_{\rm U} = G_{\rm a} = 1$ (linear scale); selection biases $B_{\rm U} = B_{\rm a} = 0\mathrm{dB}$; and Nakagami fading parameters $m_{\rm L} =3$, $m_{\rm NL} =3$, $m_{21} =  m_{22} = 3$, power scaling constant $C = 10^{-7}$.

\begin{figure}[!t]
\begin{minipage}[t]{0.495\linewidth}
    \centering
    \includegraphics[width=4.35cm, height=4cm]{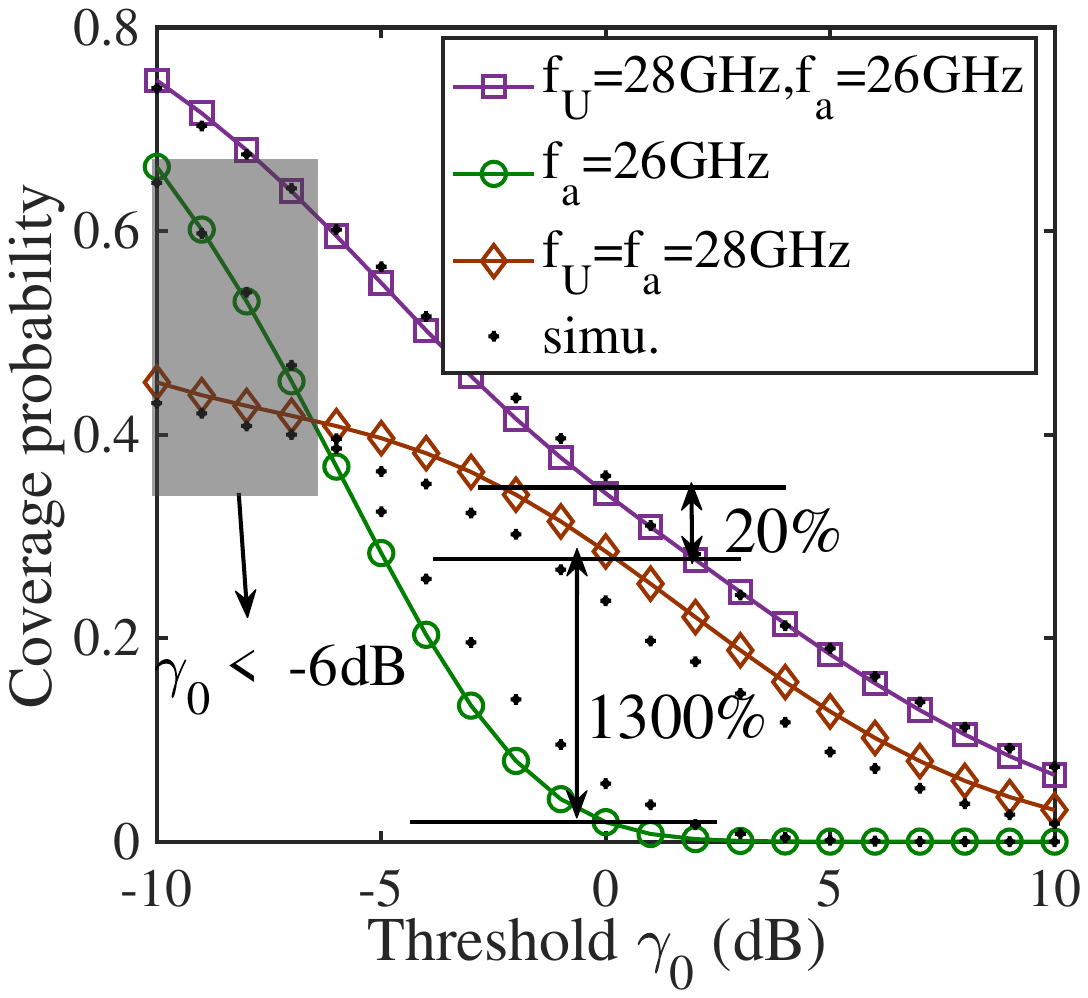}
    \caption{Coverage probability versus SNR threshold.}
    \label{cov}
\end{minipage}%
\hfill
\begin{minipage}[t]{0.495\linewidth}
    \centering
    \includegraphics[width=4.35cm, height=4cm]{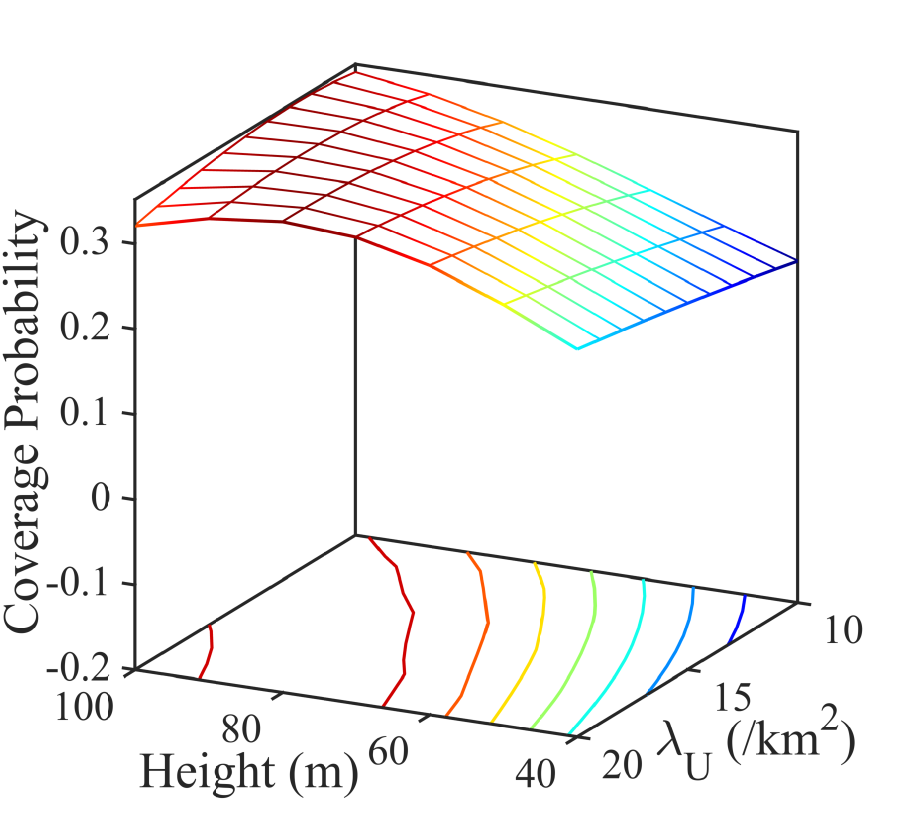}
    \caption{Coverage probability versus UAV density. ($\gamma_0 =0 \rm{dB}$)}
    \label{density}
\end{minipage}
\end{figure}

Fig.~\ref{cov} shows the coverage probability with the change of SINR threshold for 3 scenarios, pure AP, adding UAV, and adding UAV with different frequency bands, resepectively. The simulation results is almost same with theoretical results. It is observed that adding UAV with different frequencies achieves the highest coverage probability, about 20\% absolute improvement than with same frequency. For low-SNR regime ($\gamma_0 < -6 \rm{dB}$), pure AP coverage is much better than with UAV, because of the additional interference from UAV.

Fig.~\ref{density} illustrates the coverage probability as a function of UAV density and UAV altitude.  
It is observed that increasing the UAV density improves coverage probability due to a higher chance of favorable LoS links.  
However, the improvement becomes less significant when the UAV density is already high, showing a saturation effect.  
Moreover, a moderate UAV altitude provides better performance by balancing the trade-off between path loss and LoS probability.

\begin{figure}[!t]
\begin{minipage}[t]{0.495\linewidth}
    \centering
    \includegraphics[width=4.4cm, height=4cm]{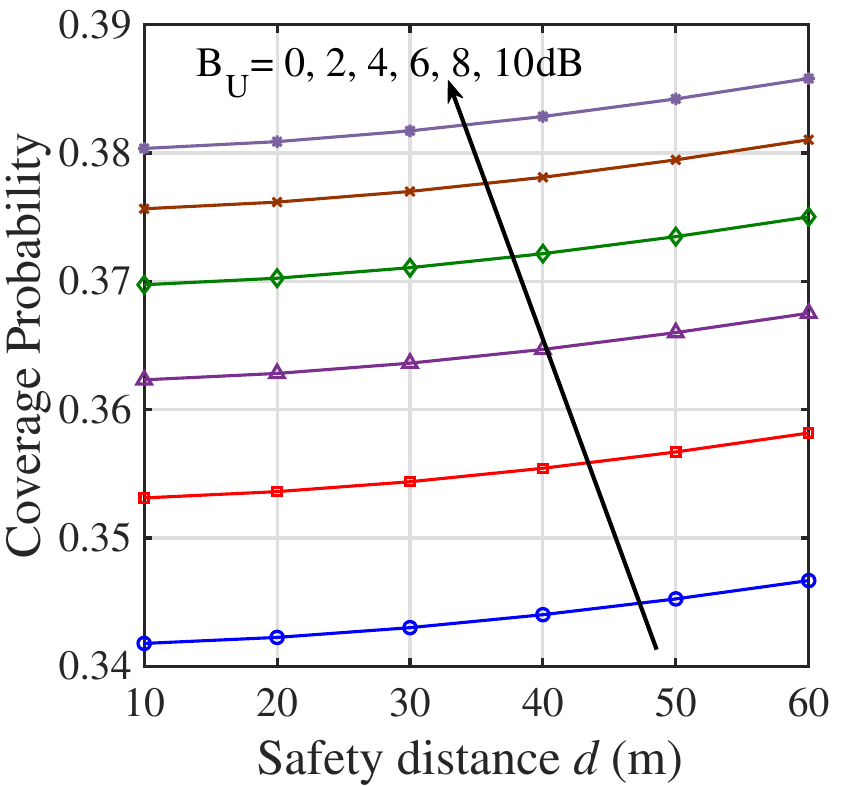}
    \caption{Coverage probability versus safety distance $d$ with different bias $B_{\rm U}$ in $\rm dB$ ($\gamma_0 =0 \rm{dB}$).}
    \label{distance}
\end{minipage}%
\hfill
\begin{minipage}[t]{0.495\linewidth}
    \centering
    \includegraphics[width=4.4cm, height=4cm]{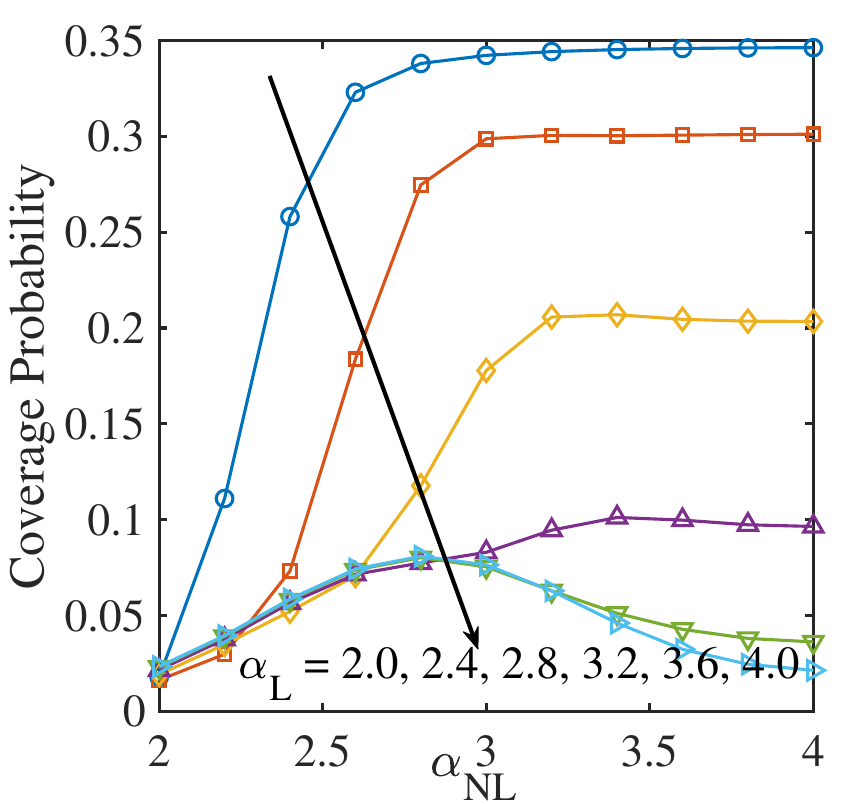}
    \caption{Coverage probability versus LoS and NLoS path loss exponent ($\gamma_0 =0 \rm{dB}$).}
    \label{pathloss}
\end{minipage}
\end{figure}

Fig.~\ref{distance} depicts the coverage probability as a function of the safety distance $d$ for various UAV selection biases $B_{\rm U}$.  
It can be observed that increasing  $B_{\rm U}$ improves the coverage performance, as a larger bias encourages users to associate with UAV, which typically provide better LoS conditions.  
In addition, a larger safety distance slightly enhances the coverage probability by mitigating strong interference from nearby UAV.  

Fig.~\ref{pathloss} illustrates the coverage probability versus the NLoS path loss exponent \(\alpha_{\rm NL}\), with different values of the LoS path loss exponent \(\alpha_{\rm L}\).  
It is observed that the coverage probability increases with \(\alpha_{\rm NL}\), as higher attenuation in NLoS links reduces interference from distant APs.  
Conversely, a lower \(\alpha_{\rm L}\) leads to better coverage due to less attenuation in LoS channels.  
The best performance is achieved when \(\alpha_{\rm L}\) is small and \(\alpha_{\rm NL}\) is large, highlighting the importance of strong LoS connectivity and suppressed NLoS interference.

\section{Conclusion}
This paper analyzed the performance of UAV-assisted vehicular cell-free networks using a stochastic geometry framework that accounts for practical road-constrained deployments and realistic wireless propagation.  
Roads were modeled by PLP, with vehicles and access points distributed along the roads via independent 1-D PPPs.  
To manage the analytical complexity introduced by signal aggregation from multiple APs, a distance-based power control strategy was proposed.
Simulation results validated the theoretical analysis and revealed the impact of key system parameters on the coverage probability.

\section*{Acknowledgment}
This work is financed by the European Commission under the Horizon Europe MSCA programme (HORIZON-MSCA-2024-SE-01-01), Grant Agreement No. 101236523 (AeroNet project).

\appendix
\subsection{Proof of Lemma 1} \label{appendixB}
To evaluate the association probability of a typical receiver with a UAV, we consider the condition under which the received power from the UAV exceeds that from the strongest AP. Specifically, we derive the probability as
\begin{small}
\begin{align}
&\mathbb{P}(P_{\rm U}B_{\rm U}G_{\rm U}R_{1}^{-\alpha_v} > P_{\rm a}B_{\rm a}G_{\rm a}R_{2}^{-\alpha_{\rm a}})
= \mathbb{P}(R_{1} < \xi_{21}^{-\frac{1}{\alpha_1}} R_{2}^{\frac{\alpha_{\rm a}}{\alpha_v}})\nonumber\\
&\overset{(a)}{=} \mathbb{E}_{R_{2}}\left[\mathbb{P}(R_{1} < \xi_{21}^{-\frac{1}{\alpha_v}} R_{2}^{\frac{\alpha_{\rm a}}{\alpha_v}} \mid R_{2})\right] \\
&= \int_{0}^{\infty} F_{R_{1}}(\xi_{21}^{-\frac{1}{\alpha_v}} r_{2}^{\frac{\alpha_{\rm a}}{\alpha_v}}) f_{R_{2}}(r_{2})\mathrm{d}r_{2} \nonumber\\
&\overset{(b)}{=} \int_0^\infty  \left[1 - {{e^{ - {\lambda _u}\pi \rho(r_2)}}}\right] 2{\lambda_{\rm a}}{e^{ - 2{\lambda_{\rm a}}{r_2}}} \mathrm{d}r_{2} \nonumber\\
&\overset{(c)}{=} 1 - \int_0^\infty  {{e^{ - {\lambda _u}\pi \rho(r_2)}}} 2{\lambda _a}{e^{ - 2{\lambda _a}{r_2}}} \mathrm{d}r_{2}, \nonumber
\end{align}
\end{small}
where $\rho(r_2) = \max(0,{\xi_{21} ^{ - \frac{2}{{{\alpha_v}}}}}r_2^{\frac{{2{\alpha_{\rm a}}}}{{{\alpha _v}}}}-{H_{\rm U}}^2) $, $\xi_{21} = \frac{P_{\rm a}B_{\rm a}G_{\rm a}}{P_{\rm U}B_{\rm U}G_{\rm U}}$, $v\in\{\mathrm{L,NL}\}$, $(a)$ is rewritten using conditional probability, where we condition on the value of \(R_{2}\), $(b)$ follows from cumulative distribution \(F_{R_{1}}(x)\) and the probability density \(f_{R_{2}}(x)\), $(c)$ follows from integral split, then proof is completed.

\subsection{Proof of Lemma 2} \label{appendixC}
We now derive the conditional CDF of the distance from the typical receiver to its serving UAV, conditioned on the association event $\mathcal{A}_v^1$. This conditional CDF is denoted as $F_{R_1}(r \mid \mathcal{A}_v^1)$, and is given by
\begin{align}
F_{R}(r|\mathcal{A}_{v}^1 ) =& \mathbb{P}(R_1 \le r \mid \mathcal{A}_{v}^1 )  \overset{(a)}{=} 1 - \frac{\mathbb{P}(R_1 > r, \mathcal{A}_{v}^1 )}{\mathbb{P}(\mathcal{A}_{v}^1 )} \\\overset{(b)}{=}& 1 - \frac{1}{\mathbb{P}(\mathcal{A}_{v}^1 )} \mathbb{P}\left( r < R_1 < \xi_{21}^{-1/\alpha_v} R_2^{\alpha_{\rm a}/\alpha_v} \right) \nonumber\\\overset{(c)}{=}& 1 - \frac{1}{{{\mathbb{P}}(\mathcal{A}_{v}^1 )}} \int_{\xi_{21}^{1/{\alpha_v}}{r^{{\alpha _v}/{\alpha_{\rm a}}}}}^\infty  {\left\{ {\left[ {{F_{{R_1}}}\left( {\xi_{21}^{ - 1/{\alpha_v}}r_2^{{\alpha_{\rm a}}/{\alpha _v}}} \right)} \right.} \right.} \nonumber\\ &~~~~~~~~~~~~~~~~~~~~~~~~~~~~~~~\left. {- {F_{{R_1}}}(r)] \times {f_{{R_2}}}({r_2})} \right\}, \nonumber\end{align}
where $(a)$ follows from the definition of conditional probability, $(b)$ incorporates the event condition $\mathcal{A}_v^1$, and $(c)$ applies the law of total probability over $R_2$.
Accordingly, the conditional PDF of $R_2$ follows from the same derivation steps as
\begin{align}
F_{R}(r \mid \mathcal{A}_{v}^2 )
&= 1 - \frac{1}{\mathbb{P}(\mathcal{A}_{v}^2 )}
\int_{\xi_{12}^{1/\alpha_{\rm a}} r^{\alpha_{\rm a}/\alpha_v}}^{\infty}  \\
&\quad \left[
F_{R_2}\left( \xi_{12}^{-1/\alpha_{\rm a}} r_1^{\alpha_v / \alpha_{\rm a}} \right) - F_{R_2}(r)
\right]
f_{R_1}(r_1)\, dr_1. \notag
\end{align}

Differentiating with respect to $r$, we get the conditional PDF
\begin{align}	
{f_{R}}(r|{\mathcal{A}_{v}^1 }) 
&= \frac{d}{{dr}}{F_{{R_1}}}(r|{\mathcal{A}_{v}^1 }) \\[-4pt]
&= \frac{1}{{{\mathbb P}({\mathcal{A}_{v}^1 })}}{f_{{R_1}}}(r)\,
   {\mathbb P}\!\left( {{R_2} > \xi _{21}^{1/{\alpha _v}}{r^{{\alpha _v}/{\alpha_{\rm a}}}}} \right). \nonumber
\end{align}
Having $\mathbb{P}(R_2 > x) = \exp(-2 \lambda_{\rm a} x),$
Thus, the conditional PDF becomes:
\begin{align}
	&{f_{R}}(r|{\mathcal{A}_{v}^1}) = \frac{{2\pi {\lambda _u}r}}{{{\mathbb{P}}(\mathcal{A}_{v}^1 )}}{e^{ - \pi {\lambda _u}{(r^2-{H_{\rm U}}^2)} - 2{\lambda _a}\xi _{21}^{1/{\alpha _v}}{r^{{\alpha _v}/{\alpha_{\rm a}}}}}},  \\
   & f_{R}(r|\mathcal{A}_{v}^2 )
= \frac{2\lambda_{\rm a}}{\mathbb{P}(\mathcal{A}_{v}^2 )} e^{-2\lambda_{\rm a} r -\lambda_u \pi \rho(r)}, 
\end{align}
where $\rho (r) = \max (0,\xi _{12}^{\frac{2}{{{\alpha _{\rm{a}}}}}}{r^{\frac{{2{\alpha _{\rm{a}}}}}{{{\alpha _v}}}}} - {H_{\rm{U}}}^2)$, $\xi_{21} = \frac{P_{\rm a}B_{\rm a}G_{\rm a}}{P_{\rm U}B_{\rm U}G_{\rm U}}$, $\xi_{12} = \frac{P_{\rm U} B_{\rm U} G_{\rm U}}{P_{\rm a} B_{\rm a} G_{\rm a}}$.

\subsection{Proof of Coverage Probability}
\label{appendix D}
We first consider the event $\mathcal{E}_1$, where the typical receiver is served by a UAV from tier-1. The coverage probability under this event can be expressed as
\begin{align}\label{coverage e1}
\mathbb{P}_{\rm C}
&= \mathbb{P} \left( \frac{P_{\rm U} G_{\rm U} h_{\rm U} r_1^{-\alpha_1}}{\sigma^2 + \mathcal{I}} > \gamma_0 \right) \\
&=\int_{H_{\rm U}}^\infty \sum_{k=0}^{m - 1} \frac{(-s^\square)^k}{k!}
\cdot \frac{\mathrm{d}^k}{\mathrm{d}{s^\square}^k} \left[
e^{-s^\square \sigma^2} \cdot \mathcal{L}_{\mathcal{I}}(s^\square)
\right] f_R(r_1) \, \mathrm{d}r_1, \nonumber
\end{align}
where $s^\square = \frac{m \gamma_0}{P_{\rm U} G_{\rm U}} r_1^{\alpha_1}$.
\textit{Proof.}
Following the same approach presented in [\cite{8340239}, Eq.(32)], so skipped,

Our goal is to evaluate the coverage probability $\mathbb{P}(\gamma > \gamma_0)$. To this end, we first characterize the distribution of the total received signal power from the nearest APs in each layer. Since the exact distribution is analytically intractable, we adopt a second-order moment matching to approximate the aggregated signal power by a Gamma distribution. Since the typical receiver is served by $K$ cooperating APs, the total received signal power can be written as
$C G_{\rm a} \sum_{i=1}^{K} h_{\rm a_0}^{(k)}.$

If $S = \sum_{i=1}^{K} h_{\rm a_0}^{(k)}$, where each $h_{\rm a_0}$ follows a Gamma distribution $h_{\rm a_0} \sim \Gamma(k_i, \theta_i)$, then the expectation of $S$ is given by $\mathbb{E}[S] = \sum_{i=1}^{K} k_i \theta_i$, and the variance of $S$ is given by $\mathrm{Var}(S) = \sum_{i=1}^{K} k_i \theta_i^2$.

We approximate \( S \) by a Gamma distribution $\Gamma(k_S, \theta_S)$\cite{5953530}, where $k_S = \frac{\left( \mathbb{E}[S] \right)^2}{\text{Var}(S)}$ and $\theta_S = \frac{\text{Var}(S)}{\mathbb{E}[S]}$, substituting the mean and variance expressions as $k_S = \frac{\left( \sum_{i=1}^{K} k_i \theta_i \right)^2}{\sum_{i=1}^{M} k_i \theta_i^2},$ and $\theta_S = \frac{\sum_{i=1}^{K} k_i \theta_i^2}{\sum_{i=1}^{M} k_i \theta_i}.$

The coverage probability becomes
\begin{equation}
\setlength{\abovedisplayskip}{4pt}
\setlength{\belowdisplayskip}{4pt}
	\mathbb{P}\left( \frac{C G_{\rm a}\sum_{i=1}^{K} h_{\rm a_0}}{\mathcal{I} + \sigma^2} > \gamma_0 \right)
= \mathbb{P}\left( S > \mu (\mathcal{I} + \sigma^2) \right), 
\end{equation}
where $\mu = \frac{\gamma_0}{C G_{\rm a}}$, \( S \sim \Gamma(k_S, \theta_S) \).
Assuming $k_S$ is an integer, we can obtain
\begin{align}
\mathbb{P}_{\rm C}(\gamma_0) &= \mathbb{P}\left[h_{\rm a_0} > \frac{\gamma_0}{C G_{\rm a}} \left( \mathcal{I}_1 + \mathcal{I}_2 + \sigma^2 \right) \right] \\
&\overset{(a)}{\approx} 1 - \mathbb{E}_{\Phi_{c}} \left[ \left( 1 - e^{- \frac{\eta \gamma_0}{CG_{\rm a}} (\mathcal{I} + \sigma^2)} \right)^{k_{\text{S}}} \right] \nonumber \\
&= \sum_{n=1}^{k_{\text{S}}} (-1)^{n+1} \binom{k_{\text{S}}}{n} \, 
\mathbb{E} \left[ e^{ - \frac{n \eta \gamma_0}{CG_{\rm a}} (\mathcal{I} + \sigma^2) } \right]\nonumber \\
&= \sum_{n=1}^{k_{\text{S}}} (-1)^{n+1} \binom{k_{\text{S}}}{n} \, 
e^{- s \sigma^2} \, \mathcal{L}_{\mathcal{I}}(s), \nonumber
\end{align}
where $s = \frac{n\eta \gamma_0}{CG_{\rm a}}$, $\eta = \frac{\bigl(k_S!\bigr)^{-1/k_S}}{\theta_S}$, and we have used the assumption that $k_S$ is integer, $(a)$ comes from Appendix A.2 of \cite{article009090}.  Then proof is completed.

\bibliographystyle{IEEEtran}
\bibliography{mybib}
\end{document}